\newcommand{\cmark}{\ding{51}}%
\newcommand{\xmark}{\ding{55}}%
\newtheorem{theorem}{Theorem}[section]
\newtheorem{lemma}[theorem]{Lemma}
\newtheorem{cor}{Corollary}[section]
\newcommand{\E}{\mathcal{E}}
\newcommand{\N}{\mathcal{N}}
\newcommand{\OO}{\mathcal{O}}
\newcommand{\Nn}{\mathbb{N}}
\newcommand{\M}{\mathcal{M}}
\newcommand{\R}{\mathbb{R}}
\newcommand{\diag}[1]{\text{diag}(#1)}
\newcommand{\xd}{\dot{x}}
\newcommand{\yd}{\dot{y}}
\newcommand{\yb}{\bar{y}}
\newcommand{\pyx}{p(y_j \mid x_i)}
\newcommand{\Pyx}{P_{y \mid x}}
\newcommand{\Pxy}{P_{x \mid y}}
\newcommand{\Py}{P_y}
\newcommand{\Pby}{\bar{P}_y}
\newcommand{\Pbyx}{\bar{P}_{y \mid x}}
\newcommand{\Pbxy}{\bar{P}_{x \mid y}}
\newcommand{\SumiN}{\sum_{i\in\N}}
\newcommand{\SumjM}{\sum_{j\in\M}}
\newcommand{\SumijNM}{\sum_{\substack{i\in\N\\ j\in\M}}}
\newcommand{\Sumjjp}{\sum_{\substack{j\in\M\\ j'\in\M}}}
\newcommand{\Sumjp}{\sum_{j'\in\M}}
\newcommand{\Sumjneq}{\sum_{j' \neq j}}
\newcommand{\dxy}{d(x_i, y_j)}
\newcommand{\normf}[1]{\left\lVert#1\right\rVert_\infty}
\newcommand{\normm}[1]{\lVert#1\rVert^2}
\newcommand{\normxy}{\normm{x_i - y_j}}
\newcommand{\pa}[1]{\left(#1\right)}
\newcommand{\parr}[2]{\frac{\partial #1}{\partial #2}}
\title{\LARGE \bf
RCP: A Temporal Clustering Algorithm for Real-time Controller Placement in Mobile SDN Systems
}
\author{Reza Soleymanifar$^{1}$ and Carolyn Beck
\thanks{$^{1}$Authors are with Coordinated Science Laboratory, University of Illinois at Urbana-Champaign, United States of America
        {\tt\small reza2@illinois.edu, beck3@illinois.edu}}%
}
\begin{document}

\maketitle
\thispagestyle{empty}
\pagestyle{empty}

\begin{abstract}
Software Defined Networking (SDN) is a recent paradigm in telecommunication networks that disentangles data and control planes and brings more flexibility and efficiency to the network as a result. The Controller Placement (CP) problem in SDN, which is typically subject to specific optimality criteria, is one of the primary problems in the design of SDN systems. {\em Dynamic} Controller Placement (DCP) enables a placement solution that is adaptable to inherent variability in network components (traffic, locations, etc.). DCP has gained much attention in recent years, yet despite this, most solutions proposed in the literature cannot be implemented in real-time, which is a critical concern especially in UAV/drone based SDN networks where  mobility  is high and split second updates are necessary. As current conventional methods fail to be relevant to such scenarios, in this work we propose a real-time control placement (RCP) algorithm. Namely, we propose a temporal clustering algorithm that provides real-time solutions for DCP, based on a control theoretic framework for which we show the solution exponentially converges to a near-optimal placement of controller devices. RCP has linear $\OO(n)$ iteration computational complexity with respect to the underlying network size, $n$, i.e., the number of nodes, and also leverages the maximum entropy principle from information theory. This approach results in high quality solutions that are practically immune from getting stuck in poor local optima, a drawback that most works in the literature are susceptible to. We compare our work with a frame-by-frame approach and demonstrate its superiority, both in terms of speed and incurred cost, via simulations. According to our simulations RCP can be up to 25 times faster than the conventional frame-by-frame method.
\end{abstract}

\section{INTRODUCTION}

Software Defined Networking (SDN) has had major impacts on the performance of telecommunication networks, especially wireless networks, and it is predicted that this paradigm will play an integral role in 5G, UAV and IoT based networks \cite{Mouawad2018OptimalPlacement}.
The key idea in SDN is to dedicate a network component known as the \textit{controller-node} for centralization of network decision making in an attempt to disentangle the data plane (network packets) from the control plane (routing process). The immediate advantage of this architecture, among many others, is to allow the network administrator to reprogram the network without disrupting the data flow.

We define a Mobile SDN network as any SDN network with non-stationary network nodes. A wide range of such networks may be found in the literature, e.g., Software Defined Vehicular Networks (SDVNs), Low Earth Orbit (LEO) constellations, Software Defined Mobile Networks (SDMNs), Aeronautical Telecommunication Networks (ATNs) or drone/UAV networks. Similar to the work presented in \cite{Alharthi2019DynamicNetworks} and \cite{Sayeed2020EfficientNetworks}, we may assume controllers are SDN-enabled aerial devices enabling a flying network infrastructure, as illustrated in Figure 1. A practical realization of such devices could be drones, or UAV's. Deploying this type of SDN controller gives great flexibility in networks in locations where ground stations may not be present \cite{Singhal2019LB-UAVnet:SDN}.

The controller placement problem (CP) in SDN can be described as the task of finding the best location and node assignments for one or more controllers in the network, in the sense that a desired criterion is optimally achieved. In this setting, the function of the controllers is to disentangle the control and data flows. Latency between controllers and switches represents the main considered criterion by the majority of studies to date. Some additional metrics have been introduced including capacity and load balancing, inter-controller communication delay, deployment cost and energy consumption \cite{Toufga2020TowardsNetworks}. In recent years, high network node mobility and density, and the more wide-spread presence of wireless links with an overall higher degree of locational variability raise the issue of adapting the number of controllers and their locations according to these changes \cite{KalupahanaLiyanage2018ControllerNetworks}. As one example, LEO satellites spin around the Earth in periods less than 130 minutes leading to resource availability dynamics \cite{Papa2018DynamicNetwork}. According to the results of \cite{Luong2019Traffic-awareNetworks}, static optimal solutions do not adequately solve the CP problem in this setting, and it is necessary to develop solutions which account for the inherent dynamics present in the system. Following the work in \cite{Basta2015SDNPatterns}, we define dynamic controller placement (DCP) as an extension of CP that allows adapting to network dynamics. While CP has been around since Heller first proposed the problem in \cite{Heller2012TheProblem}, DCP is fairly new and is gaining increasing attention in the scientific community, due to the ever-growing need for more adaptive placement solutions.

\begin{figure} 
\label{fig:sdn}
\centering
\includegraphics[width = 8cm]{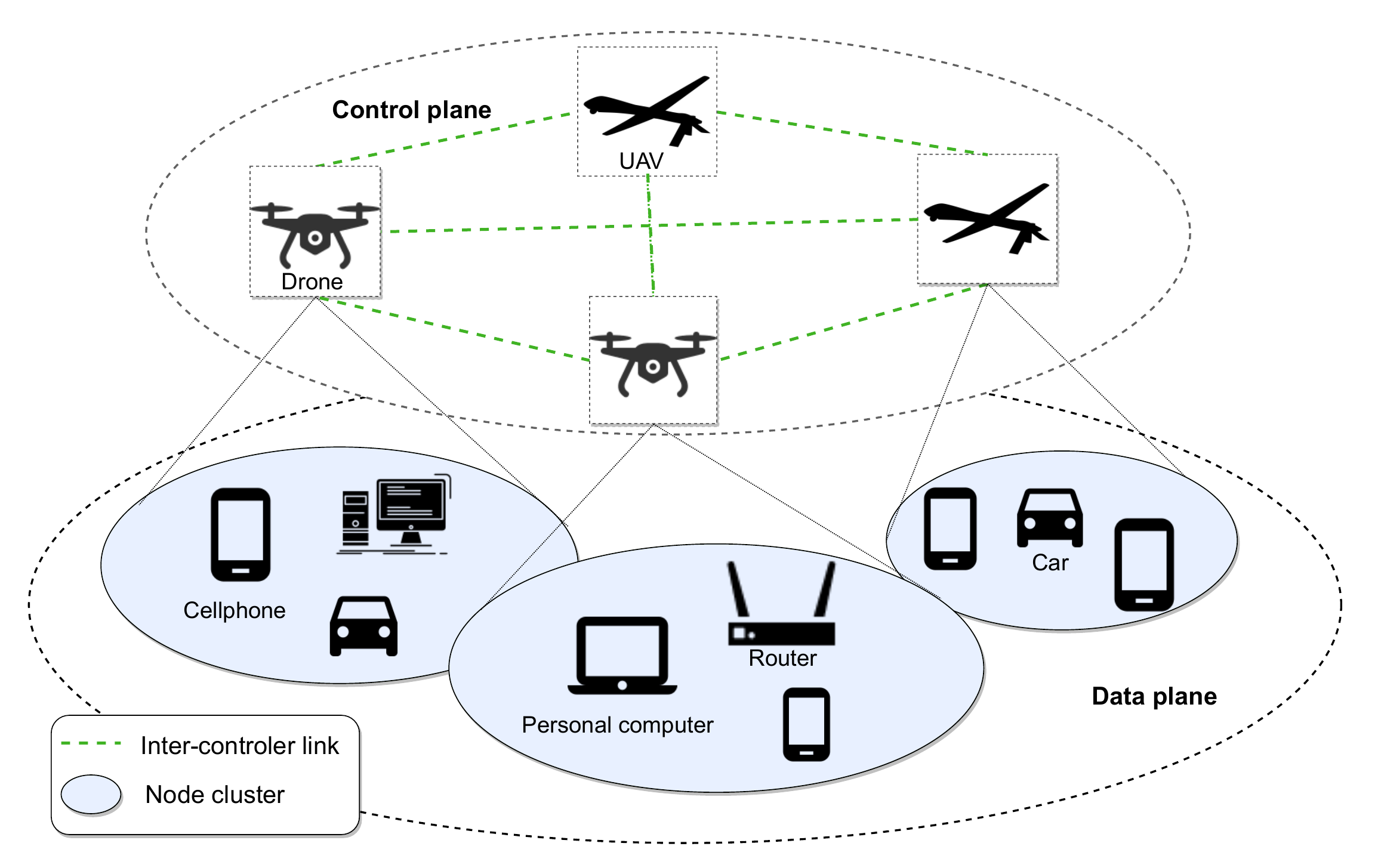}
\caption{A leaderless UAV/drone based SDN system with mobile network control nodes.}
\end{figure}

 We define a system snapshot, or frame, as a frozen state of the network at any fixed time. From an implementation standpoint, a dynamic controller localization problem can be regarded as a time-indexed set of static localization problems \cite{Sharma2012Entropy-basedProblems}. This perspective motivates a naive approach to DCP that is to re-solve CP every time a change in system occurs. However, this approach is clearly inefficient as previous solutions are not utilized and adapted to avoid unnecessary computations. We refer to such methods as \textit{frame-by-frame} approaches that ignore the temporal relationship between system snapshots and work with each frame in isolation. A pitfall of this approach is that if convergence time for CP is slower than the rate at which new network snapshots arrive--which is frequently the case for large-scale networks--this technique fails. To the best of our knowledge, all of the key works currently addressing the DCP problem in the literature fall in the frame-by-frame category.

Conventionally, mobility of the nodes is simply abstracted as a series of locations, thus rendering DCP a discrete optimization problem \cite{Toufga2020TowardsNetworks, Papa2018DynamicNetwork, Champagne2018ANetworking}. This type of modeling yields solutions that tend to be suboptimal compared to open search methods, but nevertheless makes the problem tractable. In this work we allow for a dynamic open search that can lead to better placements.
 
\begin{table*}[htbp!]
\label{tab:key_research}
\caption {Key research in DCP literature.}
\begin{tabular}{@{}lllllll@{}}
\toprule
Paper                                                                                             & Proactive & Context           & Objective                         & Variable             & Solution            & Real-time \\ \midrule
\cite{Toufga2020TowardsNetworks}                       & No        & SDVN              & Load balancing+Latency            & Traffic fluctuation  & ILP                 & \xmark        \\
\cite{Mouawad2018OptimalPlacement}                                                      & No        & 5G SDN            & Latency                           & Packet flows         & QP                  & \xmark        \\
\cite{Papa2018DynamicNetwork}                          & Yes       & LEO Constellation & Flow setup time                   & Network flows        & ILP                 & \xmark        \\
\cite{Alharthi2019DynamicNetworks}                                   & No        & Drone SDN         & Link quality                      & Node positions       & ILP                 & \xmark        \\
\cite{Champagne2018ANetworking}               & No        & SDN               & Load balancing+Latency            & Traffic fluctuation  & GA                  & \xmark        \\
\cite{ulHuque2017Large-ScalePlacement}                                                          & No        & SDN               & Maximum latency                   & Traffic fluctuation  & Heuristic           & \xmark        \\
\cite{He2019TowardApproach}            & Yes       & SDN               & Latency+migration                 & Traffic fluctuation  & SA                  & \xmark        \\
\cite{Luong2019Traffic-awareNetworks}
& No        & ATN               & Load balancing                    & Switch positions     & ILP+GA              & \xmark        \\
\cite{He2018HowPlane}                                                        & No        & SDN               & Flow setup time                   & Network flows        & ILP                 & \xmark        \\
\cite{Wu2018DynamicNetworking}                  & Yes       & SDSN              & Load balancing+Latency+economic   & Network flows        & APSO                & \xmark        \\
\cite{He2017ModelingFlows}            & No        & SDN               & Flow setup time                   & Network flows        & ILP                 & \xmark        \\
\cite{Wang2017AnNetworks}       & Yes       & Data Centers      & Latency+synchronization+migration & Request rate         & Game theoretic      & \xmark        \\
Our work                                                                                          & No           & Mobile SDN        & Latency+Synchronization           & Exact node positions & Temporal clustering & \cmark       \\ \bottomrule\bottomrule
\end{tabular}
\end{table*}
 
    Table 1 lists the approaches that have been used to solve DCP. For real-world networks with thousands to millions of nodes, which translates to the same order of constraints and variables in approaches like Integer Linear Programming (ILP) and Quadratic Programming (QP), the DCP problem cannot be solved in real-time, because the convergence times for these algorithms fall in the order of minutes to hours. To the best of our knowledge, this is the first work to solve the DCP problem in real-time, and does not rely on the frame-by-frame approach. Loosely speaking, RCP achieves this computational gain by removing the main loop, consisting of repetitive search in the solution space, which by default exists for any iterative optimization procedure that tries to find the placement of controllers. This comes at the cost of having sub-optimal placements at first, but exponentially converging to near-optimal placements as time goes by.

We classify the main contributions of our work as (1) finding high quality solutions for DCP in \textit{real-time}; (2) using the \textit{open search} method to explore all possible placements; (3) exploiting the temporal dynamics of the network for time complexity efficiency; (4) utilizing the maximum entropy principle in our algorithm to avoid poor local optima, resulting in better placements and assignments.

The rest of the paper is organized as follows. In section \ref{sec:lit-review} we review the recent research on controller-node placement in SDN. In section \ref{sec:prob-statement}, we introduce a mathematical framework for the controller-node placement problem in mobile SDN, and provide an in-depth statement of the problem. In Section \ref{sec:sol} we introduce our RCP algorithm and provide an analysis of its convergence properties.  Simulation results are given in Section \ref{sec:simulations} demonstrating the performance of our RCP algorithm, which we henceforth simply refer to as RCP. Finally we conclude the paper in Section \ref{sec:conclusion}, in which we summarize the contributions of our work and present possible future directions of this research.

\section{LITERATURE REVIEW}
\label{sec:lit-review}

In \cite{Papa2018DynamicNetwork}, in which a Low-Earth-Orbit (LEO) constellation setting is considered, the authors partition the time horizon into static snapshots and compute optimal placements and assignments of controllers at each time step using ILP. They assume the incoming flow at each snapshot is already known and {\em proactively} solve the DCP problem. They report that this ``dynamic" approach outperforms the standard static approach by approximately an average of 20\%.
In \cite{Alharthi2019DynamicNetworks}, the authors formulate the DCP problem for an SDN enabled drone network as a Mixed Integer Non-linear Program (MINLP), and propose solving this problem using a heuristic that relies on decoupling the placement and assignment tasks.

In \cite{Champagne2018ANetworking}, the authors use a multi-objective genetic algorithm (GA) to break SDN networks into domains and sub-graphs assigned to controllers. They use inter-controller latency, load distribution, and controller numbers as the fitness metrics of their GA algorithm.

In \cite{ulHuque2017Large-ScalePlacement}, the authors introduce the algorithm LiDy+, which has run-time complexity of $\OO(n^2)$ (an improvement over predecessors, with run time complexities of $\OO(n^2logn))$, and requires a smaller number of controllers while achieving a higher controller utilization. This method relies on heuristics for placing  controller modules and for adjusting the number of controllers needed according to traffic fluctuations.
The authors in \cite{Toufga2020TowardsNetworks} develop an ILP for a SDVN that updates the reallocation of roadside units (RSU) to their corresponding controllers. The dynamics of the network in this work is abstracted as the number of vehicles communicating with a RSU at each time step; here the ILP-based algorithm objective is to minimize a mixed latency and load balancing cost function.

The authors in \cite{He2019TowardApproach} cast DCP as a multi-period MINLP with partial information of future traffic flows. The authors consider both operational and migration costs and decompose the problem into smaller online problems, solving them using the Simulated Annealing (SA) algorithm. The authors in \cite{Luong2019Traffic-awareNetworks} design an ILP algorithm to address dynamic controller placement in Aeronautical Telecommunication Networks (ATNs). They propose two heuristic algorithms, DPFA and GA-DPDA, to solve the ILP problem when controller failure happens due to packet flow overload. In \cite{He2018HowPlane}, the authors introduce an ILP problem that considers both migration time and switch re-assignment time. Each time a new flow profile arrives this ILP is recalculated.

Software Defined Satellite Networking (SDSN) is considered in \cite{Wu2018DynamicNetworking} where authors partition the time horizon into smaller intervals within which the average flow per switch is assumed known. They further assume that back-up controller-nodes are placed throughout the network, and by toggling the on-off
status of controller-nodes they meet the changing network conditions. The solution approach proposed for this problem uses Accelerated Particle Swarm Optimization (APSO). In \cite{He2017ModelingFlows}, the authors report up to a 50\% improvement over static placement methods using a dynamic controller placement scheme that relies on solving an ILP that re-calculates optimal placements when system change occurs.

In \cite{Wang2017AnNetworks}, the authors consider the controller assignment aspect of DCP and decompose the problem into a series of stable matching problems with transfers, for which they propose a hierarchical two-phase algorithm that efficiently uses knowledge of future arrival rates. They report a 46\% reduction in cost and better load balancing compared to static assignment.

The authors in \cite{Basta2015SDNPatterns} model and analyze a realization of the mobile core network as virtualized software
instances running in data centers and SDN transport network elements, with respect to time-varying traffic demands. In \cite{Mouawad2018OptimalPlacement}, the authors develop a Quadratic Program (QP) that aims to minimize network switch to controller latency. As network packet flows change and controller overloads occur, they use another QP to perform switch migration to meet the increased load.

To summarize, there are two main approaches for the DCP problem in the literature. Either authors assume they proactively know the value of network variables in the future, as in \cite{Papa2018DynamicNetwork}, \cite{He2019TowardApproach}, \cite{Wu2018DynamicNetworking}, \cite{Wang2017AnNetworks}, or they rely on what we earlier described as the \textit{frame-by-frame} approach, as in \cite{Alharthi2019DynamicNetworks}, \cite{ulHuque2017Large-ScalePlacement} amongst others. Both of these methods fall short of real-world practicality. The former assumes availability of data that is typically not known, and the latter does not exploit the temporal relationship between network states over time. To overcome this gap, our work is aimed at creating a new placement procedure that can work in real-time and exploit given temporal relationships of the system; this has led to the design of the RCP algorithm.


\section{PROBLEM STATEMENT}
\label{sec:prob-statement}
We make the same assumptions on the communication protocol, controller type, and cost function as in  \cite{Qin2018SDNNetworks}. Specifically we assume network drones/UAV's are SDN-enabled and programmable via an API such as OpenFlow, similar to that described in \cite{Alharthi2019DynamicNetworks}. In this paper, we will maintain a high-level focus on the topology of the network and placement of controllers

For simplicity, our analyses are given in $\R^2$ equipped with the Euclidean norm; however our results can be extended directly to $\R^n$ (namely, to $\R^3$ in which the basic problem lies). We further assume that the domain of the problem $\Omega\subset \R^2$ is a compact set that serves as the space within which the network operates. Throughout the paper we use the shorthand notation $[A]_{ij} = a_{ij}$ to represent the matrix $A$ that is constructed by equating its $ij$th element to the scalar $a_{ij}$, and $A = \diag{a_j}$ to represent the diagonal matrix $A$ that has $a_j$ as the $j$th diagonal element. We also use $I_n$ to denote the identity matrix of size $n\times n$. We may occasionally drop the time index $t\in \R^+$ from the dynamical equations for the sake of readability and in such circumstances the reader can infer this from the context of the problem. We represent a mobile network as an undirected graph $G(\N,\E)$ with $\N$ as the set of nodes of the graph and $\E$ as the set of edges connecting these nodes. We represent the network nodes by the set $\N$ and the controller-nodes by the set $\M$. We will use $N =|\N|$ and  $M = |\M|$ to denote the number of regular network nodes and controller-nodes in the network, respectively. In the controller-node placement topology under study~\footnote{This topology is known as leaderless if there is no hierarchy amongst the controller-nodes.}, there is communication between controller-nodes and their assigned network nodes, and between controller-nodes themselves, however network nodes do not directly communicate with each other. In fact, each node is \textit{assigned} to a controller-node that serves as a gate between the node and the rest of the network \cite{Qin2018SDNNetworks}. In this work, we further assume that the delay in the network is proportional to the Euclidean distance between source and origin of connection. Let $x_i(t), y_j(t)\in \R^2$ represent the location of the network node $i\in\N$ and controller-node $j\in\M$, respectively, at time $t\in\R^+$. The dynamics of the network node $i\in \N$ is determined by the function $\phi_i(t, x, y): \R^+\times\R^{2N}\times\R^{2M}\rightarrow\R^2$ which we assume is continuously differentiable. The velocity of the controller-node $j\in\M$ is determined by the vector $u_j(t) \in \R^2$; it is this velocity function we aim to design to achieve real-time controller placement by solving a cluster tracking control problem. We thus represent a mobile SDN system by the following dynamical system:
\begin{equation}
\label{eq:system}
\begin{cases}
\xd =& \phi(t, x, y)\\
\yd =& u
\end{cases}
\end{equation}
where $x\in\R^{2N}$, $y\in\R^{2M}$, $\phi(t, x, y):\R^+\times\R^{2N}\times\R^{2M}\rightarrow\R^{2N}$, and $u\in\R^{2M}$ are vectors that are constructed by concatenation of network and controller node location and velocity vectors. Denoting $\zeta = [x^T y^T]^T\in \R^{2(N+M)}$ as the vector containing all positional information of the network nodes and controller-nodes, and letting $f(t, x, y) = [\phi(t, x, y)^T u(t)^T]^T\in \R^{2(N+M)}$ denote the vector containing the velocities of these nodes; then we compactly refer to this first order possibly nonlinear state-space system as
\begin{equation}
\label{eq:sys}
\dot{\zeta} = f(\zeta).
\end{equation}

Temporal clustering refers to separation of a time-indexed set of objects into disjoint partitions known as \textit{clusters} that satisfy a certain degree of similarity. The most representative point within each cluster is typically called the centroid, which in this work is equivalent to the location of the controller-node. Following the approach introduced by Rose (see \cite{Rose1998DeterministicProblems}), we leverage the Maximum Entropy Principle for our solution, letting $\pyx\in[0,1]$ denote the intensity of association of network node $i\in \N$ with controller-node (and cluster) $j\in\M$ such that $\SumjM \pyx = 1$; this quantity is also referred to as an association weight. We represent the objective function $F: \R^{N\times M} \times \R^{2N} \times \R^{2M}\rightarrow \R$ associated with this clustering problem by~\footnote{This is a relaxed version of the cost function in \cite{Qin2018SDNNetworks}. For details refer to \cite{Soleymanifar2020ABalancing}.}

\begin{align}
\label{eq:F}
    F(\Pyx, x, y) = 
    &\underbrace{
    \SumijNM \pyx \normm{x_i - y_j 
    }}_{D_1 :\text{delay cost}}
    \\\nonumber
    +\gamma&\underbrace{
    \Sumjjp \normm{y_j - y_{j'}} \SumiN \pyx
    }_{D_2 : \text{synchronization cost}}
    \\\nonumber
    -T&\underbrace{
    \pa{-\SumijNM \pyx \log \pyx
    }}_{\text{H: entropy}}
\end{align}
Here we aim to find a set of trajectories for $y_j$ which minimizes $F(\Pyx, x, y)$, thereby minimizing latency and synchronization times by using a maximum entropy function to help convexify the original problem. Note that so-called migration~\footnote{re-positioning controllers and re-assigning network nodes.} cost is outside the of scope of the present paper and so our objective function does not reflect this cost, as in \cite{Toufga2020TowardsNetworks}, \cite{Wu2018DynamicNetworking}, and \cite{Alharthi2019DynamicNetworks} among others.

\section{SOLUTION} \label{sec:sol}

 Based on the results of \cite{Soleymanifar2020ABalancing} and \cite{Xu2014ClusteringDynamics}, in which the authors extend the results of Rose to show that for a given set of trajectories, $\{y_j(t)\}$, a Gibbs distribution will minimize equation \eqref{eq:F}, we have 
\begin{equation}
\label{eq:probs}
\pyx = \exp{\left(-\frac{\dxy}{T}\right)}/Z_i\quad \forall i\in \N, j\in \M,
\end{equation}
where $\dxy = \normxy + \Sumjp\normm{y_j - y_{j'}}$ is the distance function, or so-called distortion measure~\footnote{Distortion is a term in information theory that signifies the dissimilarity or distance between two points.} between node $i$ and controller-node $j$, and $Z_i$ can be seen as the usual normalizing partition function. These association weights will become ``hard'' if we let $T$ go to zero, and uniform, as $T$ approaches a very high value. Formulation $\eqref{eq:F}$ and the ``annealing" or ``temperature" parameter $T$ ensures that the total system delay attains a good local minimum in theory, while the nodes are initially maximally noncommittal towards the controllers. This latter point is essential since according to the maximum entropy principle in information theory, among all candidate distributions, the one with highest entropy best describes the current state of the system. This approach also has advantages for optimization over the surface of the cost function \eqref{eq:F} where local optima abound \cite{Rose1998DeterministicProblems}. 

Using the terminology of \cite{Sharma2012Entropy-basedProblems}, let $[P_{y|x}]_{ij} = p(y_j | x_i) \in \R^{N \times M}$ be the matrix that contains information on the relative \textit{shape} of the clusters. Similarly define $[P_{x|y}]_{ij} = p(x_i | y_j) \in \R^{N \times M}$ as the matrix containing posterior associations (directly analogous to posterior probabilities) $p(x_i | y_j)$, which we calculate using Bayes' rule, with $p(x_i) = \frac{1}{N}$, for all $i \in \N$. Moreover, define $P_y =$ diag$(p(y_j)) \in \R^{M \times M}$ as the matrix containing information on the \textit{mass} of the clusters, where $p(y_j) = \sum_{i \in \N} p(y_j | x_i) p(x_i).$ Let $\Pbyx$,  $\Pbxy \in \R^{2N\times 2M}$ and $\Pby \in R^{2M \times 2M}$, such that $\Pbyx = \Pyx \otimes I_2$, $\Pbxy = \Pxy \otimes I_2$, and $\Pby = \Py \otimes I_2$. In prior work, we've shown that the optimal placement of controller $y$ with respect to energy function $F$ follows as:
\begin{equation}
    \label{eq:centroid}
    y = \Theta^{-1}\Pbxy^T x
\end{equation}
where $\Theta \in \R^{2M\times 2M}$ is a block matrix with $M^2$ blocks of size $2\times 2$. The diagonal blocks are equal to $\eta I_2$ where $\eta = \gamma(M-1) + 1$ and the non-diagonal blocks are equal to $-\gamma I_2$. We show that for $\gamma\neq \frac{1}{N-M}$ and $\gamma \neq \frac{1}{N - 2M}$, $\Theta^{-1}$ is well defined (see \cite{Soleymanifar2020ABalancing}). Ideally we want the function $F$ to serve as a control Lyapunov function, requiring the time derivative of $F$ along the trajectory of network nodes and controllers to be non-positive which then ensures a non-increasing value for \eqref{eq:F}, our objective function. Following the development in \cite{Soleymanifar2020ABalancing} and \cite{Xu2014ClusteringDynamics}, we can show the following.

\begin{theorem}
Given the control Lyapunov function \eqref{eq:F}, for the system defined in \eqref{eq:system} the time derivative of $F$ has the following structure.
\begin{equation}
\label{eq:derivative}
\dot{F} = 2\zeta^T\Gamma(\zeta)f(\zeta),\;
\Gamma(\zeta) = 
\begin{bmatrix}
&I_{2N\times 2N} &-\Pbyx\\
&-\Pbyx^T &N\Theta \Py
\end{bmatrix}
\end{equation}
\end{theorem}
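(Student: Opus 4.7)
The plan is to differentiate $F(\Pyx,x,y)$ along the trajectories of \eqref{eq:system} by the chain rule and collect the result into three groups driven by $\dot{P}_{y\mid x}$, $\dot x$, and $\dot y$. The first group vanishes by an envelope-type argument: the Gibbs distribution \eqref{eq:probs} is the constrained minimiser of $F$ over the simplex $\{\sum_{j\in\M}\pyx = 1\}$ for each $i\in\N$, so by Lagrange optimality $\partial F/\partial\pyx$ equals a common multiplier $\lambda_i$ across $j$; combined with $\sum_{j\in\M}\dot p(y_j\mid x_i)=0$ (from time-differentiating the normalisation), this kills the whole $\dot{P}_{y\mid x}$ contribution. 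It then suffices to identify $\sum_i\nabla_{x_i}F\cdot\dot x_i + \sum_j\nabla_{y_j}F\cdot\dot y_j$ with the quadratic form $2\zeta^T\Gamma(\zeta) f(\zeta)$.

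\textbf{Left column of $\Gamma$.} Only the delay term $D_1$ depends on the $x_i$'s, giving $\nabla_{x_i} F = 2\sum_{j\in\M}\pyx (x_i - y_j) = 2x_i - 2\sum_{j\in\M}\pyx\, y_j$ after invoking the row-sum. Stacking over $i$ and using $\Pbyx = \Pyx\otimes I_2 \in \R^{2N\times 2M}$, the vector gradient equals $2(x - \Pbyx\, y)$; inner-producting with $\dot x$ produces $2x^T\dot x - 2y^T\Pbyx^T\dot x$, which is exactly the contribution of the $(1,1)=I_{2N}$ and $(2,1)=-\Pbyx^T$ blocks of $\Gamma$ in the expansion $2\zeta^T\Gamma f = 2x^T\dot x - 2x^T\Pbyx\dot y - 2y^T\Pbyx^T\dot x + 2Ny^T\Theta\Pby\dot y$.

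\textbf{Right column of $\Gamma$.} The gradient $\nabla_{y_j}F$ splits into a delay piece $-2(\Pbyx^T x)_j + 2N p(y_j)y_j$ (using $\sum_{i\in\N}\pyx = N p(y_j)$) and a synchronisation piece $\gamma\,\partial D_2/\partial y_j$. Rewriting $D_2$ in symmetrised form $\tfrac{N}{2}\Sumjjp\|y_j-y_{j'}\|^2[p(y_j)+p(y_{j'})]$ and applying the product rule, the combined linear action on $y$ exhibits diagonal blocks $N\eta p(y_j) I_2$ (with $\eta=\gamma(M-1)+1$) and off-diagonal blocks $-N\gamma p(y_k)I_2$, which are exactly the entries of $N\Theta\Pby$ with $\Theta$ as defined just before \eqref{eq:centroid}. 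Inner-producting with $\dot y$ and using symmetry of $\Theta$ and $\Pby$ to move operators across the bilinear form then gives $-2x^T\Pbyx\dot y + 2N y^T\Theta\Pby\,\dot y$, completing the match to the remaining two blocks of $\Gamma$.

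\textbf{Main obstacle.} The principal difficulty is the bookkeeping in the synchronisation derivative: the product rule on the summand $\|y_j-y_{j'}\|^2\sum_i \pyx$ generates several species of terms (diagonal, pair-off-diagonal, and two probability-weighted cross terms), which must be regrouped to expose the block pattern of $\Theta\Pby$. I would manage this by starting from the symmetrised form of $D_2$, invoking the identity $\eta+\gamma = \gamma M+1$ to align the diagonal and off-diagonal coefficients, and leveraging the normalisation identities developed in \cite{Soleymanifar2020ABalancing}. Concatenating the left- and right-column contributions then yields $\dot F = 2\zeta^T\Gamma(\zeta)f(\zeta)$.
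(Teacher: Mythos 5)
Your proposal is correct and follows essentially the same route as the paper's proof: differentiate $F$ along trajectories via the chain rule, compute $\parr{F}{x_i}$ and $\parr{F}{y_j}$, and assemble them into the block form $\parr{F}{\zeta} = 2\zeta^T\Gamma(\zeta)$. The only addition is your explicit envelope/Lagrange argument showing the $\dot{P}_{y\mid x}$ contribution vanishes at the Gibbs optimum, a step the paper leaves implicit by differentiating only ``with respect to its constituents'' $x_i$ and $y_j$.
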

\begin{proof}
Using basic calculus and taking partial derivatives of $F$ with respect to its constituents we can show that $\forall i\in \N$ and $j\in \M$,
\[
\parr{F}{x_i} = 2\pa{x_i \SumjM \pyx - \SumjM y_j \pyx}
\]
and 
\[
\parr{F}{y_j} = 2\eta N p(y_j)y_j - N\gamma p(y_j)\Sumjneq y_{j'}- \SumiN \pyx x_i.
\]
Thus so far we have shown that 
\[
\parr{F}{\zeta} = 2 \zeta^T \Gamma(\zeta).
\]
The next step is to apply the chain rule
\[
\frac{d F}{d t} = \parr{F}{\zeta}\parr{\zeta}{t} = \parr{F}{\zeta}f(\zeta),
\]
from which the desired result follows.
\end{proof}

Our goal here is to design a control law $u\in \R^{2M}$ such that the output of system \eqref{eq:system} asymptotically tracks the optimal placement of controllers in the 2D plane based on the control Lyapunov function \eqref{eq:F}. Following the results of \cite{R.SepulchreConstructiveBooks}, \cite{Sontag1983LYAPUNOV-LIKECONTROLLABILITY.}, \cite{Sontag1989AStabilization}, and \cite{Sharma2012Entropy-basedProblems} we propose the following control law, and show it results in a non-increasing function $F(t).$

\begin{theorem}
For the nonlinear system given in Equation \eqref{eq:sys} if 
\begin{equation}
\label{eq:control}
u = -\left[k_0 + \frac{(x^T - y^T\Pbyx^T)\phi}{\yb^T \Pby \yb}\right]\yb
\end{equation}
where $K_0 > 0$ is a positive scalar and $\yb = N \Theta\pa{y - \Theta^{-1}\Pbxy^T x}$, then $\dot{F}(t) \leq 0$ for all $t\in \R^+$.
\end{theorem}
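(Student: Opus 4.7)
The plan is to substitute the proposed control law $u$ directly into the expression for $\dot F$ given in Theorem 4.1 and watch most of the terms cancel. Writing $\zeta^T = [x^T,\,y^T]$ and $f(\zeta)^T = [\phi^T,\,u^T]$ and expanding the block product, I would first show
\[
\dot F = 2\,(x^T - y^T\Pbyx^T)\,\phi \;+\; 2\,\bigl(-x^T\Pbyx + N y^T\Theta\Pby\bigr)\,u .
\]
The coefficient of $\phi$ already has exactly the form that appears in the control law, so the whole argument hinges on rewriting the coefficient of $u$ as a simple multiple of $\yb^T\Pby$.

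Next I would establish the identity
\[
-x^T\Pbyx + N y^T\Theta\Pby = \yb^T\Pby .
\]
This is the one non-trivial step, and I expect it to be the main obstacle. The tool is Bayes' rule with uniform prior $p(x_i)=1/N$, which gives $\Pxy = \tfrac{1}{N}\Pyx P_y^{-1}$, hence $\Pbxy^T = \tfrac{1}{N}\Pby^{-1}\Pbyx^T$ (using that the Kronecker product with $I_2$ commutes with transpose and inverse, and that $\Pby$ is diagonal hence invertible on its support). Combined with the symmetry of $\Theta$ (diagonal blocks $\eta I_2$, off-diagonal blocks $-\gamma I_2$), the definition $\yb = N\Theta\bigl(y-\Theta^{-1}\Pbxy^T x\bigr)$ yields $\yb^T = Ny^T\Theta - x^T\Pbyx\Pby^{-1}$, and post-multiplying by $\Pby$ gives the claimed identity. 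A side condition to flag is the well-definedness of $\Theta^{-1}$, which is ensured by the assumption $\gamma\neq \tfrac{1}{N-M},\tfrac{1}{N-2M}$ recalled in the excerpt.

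With the identity in hand, $\dot F = 2(x^T-y^T\Pbyx^T)\phi + 2\,\yb^T\Pby\,u$. Substituting the control law
\[
u = -\left[k_0 + \frac{(x^T - y^T\Pbyx^T)\phi}{\yb^T \Pby \yb}\right]\yb
\]
into the second term and simplifying, the factor $\yb^T\Pby\yb$ in the denominator cancels against the $\yb^T\Pby\yb$ produced by $\yb^T\Pby\yb$, so the term $2(x^T-y^T\Pbyx^T)\phi$ is exactly cancelled and one is left with
\[
\dot F = -2k_0\,\yb^T\Pby\,\yb .
\]

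Finally, since $\Pby = P_y\otimes I_2$ is block-diagonal with blocks $p(y_j)I_2$ and each $p(y_j)\ge 0$, the matrix $\Pby$ is positive semidefinite, so $\yb^T\Pby\yb\ge 0$. With $k_0>0$ this gives $\dot F(t)\le 0$ for all $t\in\R^+$, completing the proof. The only subtlety worth a remark is that the control law is singular when $\yb^T\Pby\yb = 0$; this occurs precisely at the desired placement (where $\yb$ has no component in the range of $\Pby$), so one would either define $u=0$ there or interpret the expression by continuity, in the spirit of the Sontag-type controllers cited before the theorem statement.
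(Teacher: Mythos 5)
Your proof is correct and follows essentially the same route as the paper's: expand $\dot F$ via the block form of $\Gamma$, use the Bayes'-rule identity $\tfrac{1}{N}\Pbyx = \Pbxy\Pby$ together with the definition of $\yb$ to collapse the coefficient of $u$ to $\yb^T\Pby$, then substitute the control law to get $\dot F = -2k_0\,\yb^T\Pby\yb \le 0$. Your closing remark about the singularity at $\yb^T\Pby\yb = 0$ addresses a detail the paper glosses over, but it does not alter the argument.
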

\begin{proof}
We can expand equation \eqref{eq:derivative} to get
\[
\dot{F} = 2
\pa{
\pa{x^T - y^T\Pbyx^T}\phi +
\pa{-x^T\Pbyx + N y^T\Theta\Py}u
}.
\]
Now using the definition of $\yb$ and $\frac{1}{N}\Pbyx = \Pbxy \Pby$ we can show that 
\begin{equation} \label{eq:dotF}
\dot{F} = 2\pa{
\pa{x^T - y^T \Pbyx^T}\phi + \yb^T \Pby u
}.
\end{equation}
Substituting the control law \eqref{eq:control} into (\ref{eq:dotF}) gives us $\dot{F} = -2K_0\yb^T \Py \yb \leq 0$ for all $\yb \in \R^{2M}$, since $\Pby$ is assumed to be a positive definite matrix~\footnote{Note that degenerate (zero mass) clusters are not allowed in this formulation and diagonal elements of this diagonal matrix are always positive.} and $K_0 > 0$.
\end{proof}

To make explicit the result that $y$ asymptotically tracks the optimal placement of controllers we state Corollary \ref{cor:track}, which follows. We first introduce Lemma \ref{lem:bounded} without proof which is useful in the proof of Theorem \ref{cor:track}. For details please see Lemma E.1. in \cite{Sharma2012Entropy-basedProblems}.

\begin{lemma}
\label{lem:bounded}
For a non-negative function $f:\R \rightarrow \R$ of bounded variation, if $\int_0^\infty f(t)dt < \infty$ then $\lim_{t\rightarrow \infty} f(t) = 0$.
\end{lemma}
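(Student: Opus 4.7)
The plan is to reduce the lemma to the classical fact that a function of bounded variation admits a \emph{limit at infinity}, and then rule out any nonzero value of that limit using integrability. Concretely, I would invoke the Jordan decomposition: since $f$ has bounded total variation on $[0,\infty)$, write $f = g - h$ where $g$ and $h$ are non-decreasing functions given by the positive and negative variation of $f$. Finite total variation guarantees that both $g$ and $h$ are bounded on $[0,\infty)$, so by monotone convergence the limits $G = \lim_{t\to\infty} g(t)$ and $H = \lim_{t\to\infty} h(t)$ exist and are finite. Consequently $L := \lim_{t\to\infty} f(t) = G - H$ exists as a (finite) real number, and since $f \geq 0$ one has $L \geq 0$.

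Once the limit is known to exist, the rest is immediate. Suppose toward contradiction that $L > 0$. Then there is some $T > 0$ such that $f(t) \geq L/2$ for every $t \geq T$, which gives
\begin{equation*}
\int_0^\infty f(t)\,dt \;\geq\; \int_T^\infty \tfrac{L}{2}\,dt \;=\; +\infty,
\end{equation*}
contradicting the hypothesis that $\int_0^\infty f(t)\,dt < \infty$. Therefore $L = 0$, which is the claim.

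The main subtlety, and the only place where care is required, is in the first step: one must know that "bounded variation" here means \emph{globally} bounded variation on $[0,\infty)$ (i.e.\ the supremum of $\sum_i |f(t_{i+1}) - f(t_i)|$ over all finite partitions of $[0,\infty)$ is finite), rather than merely locally bounded variation on each compact subinterval. Without the global hypothesis the Jordan components $g,h$ need not be bounded and could tend to $+\infty$, in which case the limit $L$ need not exist and the argument breaks down. Under the global BV assumption, however, the Jordan decomposition and monotone convergence combine in a completely standard way, and no further machinery is needed. An alternative route that avoids Jordan decomposition would be to argue directly that, if $f(t_n) \geq \varepsilon$ along some sequence $t_n \to \infty$, then each "recovery to a value below $\varepsilon/2$" contributes at least $\varepsilon/2$ to the total variation, forcing only finitely many such dips and hence a positive lower bound for $f$ on an infinite-measure set, contradicting integrability; but the Jordan-decomposition route is cleaner.
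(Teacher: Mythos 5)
Your proof is correct. Note that the paper itself gives no proof of this lemma --- it is stated ``without proof'' with a pointer to Lemma E.1 of the cited reference by Sharma et al. --- so there is no in-paper argument to compare against; your write-up is a valid self-contained substitute. The Jordan-decomposition route is the standard one: finite total variation on $[0,\infty)$ makes the positive and negative variation functions bounded and monotone, hence convergent, so $L=\lim_{t\to\infty}f(t)$ exists, and non-negativity plus integrability forces $L=0$ exactly as you argue. You are also right to flag that the hypothesis must be \emph{global} bounded variation on $[0,\infty)$; with only local BV the conclusion fails (e.g.\ a sum of ever-narrower triangular bumps of height $1$ is locally BV, non-negative, and integrable but has no limit), and this is precisely the reading the paper needs, since in Theorem 3.4 the lemma is applied to $-\dot F$ to conclude $\yb\to 0$. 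One small point of rigor you could add: the hypothesis $\int_0^\infty f<\infty$ implicitly requires $f$ to be measurable, which is automatic for BV functions since they have at most countably many discontinuities.
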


\begin{cor}
\label{cor:track}
For the dynamics given by system \eqref{eq:system} using the control law in \eqref{eq:control}, $y$ asymptotically tracks the optimal placement of controller nodes. That is:
\[
\lim_{t\rightarrow \infty} y(t) = \Theta^{-1}\Pxy x(t)
\]
\end{cor}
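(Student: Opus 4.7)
The plan is to chain together the monotonicity result from Theorem~\ref{cor:track}'s predecessor with Lemma~\ref{lem:bounded} to show that the tracking error $\yb$ vanishes, and then invert the definition of $\yb$ to recover the claim.

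First I would record the one-line consequence of the previous theorem: under the control law \eqref{eq:control},
\[
\dot{F}(t) \;=\; -2K_0\,\yb^T(t)\,\Pby(t)\,\yb(t) \;\leq\; 0,
\]
so $F$ is non-increasing along trajectories. Since $\Omega$ is compact the squared-distance terms $D_1,D_2$ in \eqref{eq:F} are uniformly bounded, and the entropy $H$ is bounded above by $\log M$ and below by $0$; hence $F$ is bounded and $F(t)\to F_\infty$ for some finite $F_\infty$. Integrating the displayed identity gives
\[
\int_0^\infty 2K_0\,\yb^T(t)\,\Pby(t)\,\yb(t)\,dt \;=\; F(0)-F_\infty \;<\;\infty.
\]

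Next I would invoke Lemma~\ref{lem:bounded} with $f(t):=\yb^T(t)\Pby(t)\yb(t)\geq 0$. The hypothesis to check is bounded variation. Because trajectories live in the compact set $\Omega$, the positions $x(t),y(t)$ are bounded; by \eqref{eq:probs} the partition functions $Z_i$ are uniformly bounded away from zero, so the association weights $\pyx$ and thus $\Pbxy,\Pby$ are smooth functions of $\zeta$ with uniformly bounded derivatives. The assumed continuous differentiability of $\phi$ and the smoothness of the control law \eqref{eq:control} (valid under the standing no-degenerate-cluster assumption that keeps $\Pby$ strictly positive definite and away from singularity) then give $|\dot f(t)|$ uniformly bounded, whence $f$ is of bounded variation. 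Lemma~\ref{lem:bounded} therefore yields $f(t)\to 0$.

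Finally, since $\Pby$ is diagonal with diagonal entries $p(y_j)$ uniformly bounded below by some $p_{\min}>0$, we have $f(t)\geq p_{\min}\,\|\yb(t)\|^2$, so $\yb(t)\to 0$. Recalling $\yb = N\Theta\bigl(y-\Theta^{-1}\Pbxy^{T}x\bigr)$ and that $\Theta$ is invertible for the values of $\gamma$ specified in Section~\ref{sec:prob-statement}, this is equivalent to
\[
\lim_{t\to\infty}\Bigl(y(t)-\Theta^{-1}\Pbxy^{T}x(t)\Bigr) \;=\; 0,
\]
which is precisely the asymptotic tracking claim.

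The main obstacle is the bounded-variation verification for Lemma~\ref{lem:bounded}: one must argue that $\Pby$ and $\Pbxy$ vary smoothly along trajectories, which in turn rests on ensuring that no cluster mass $p(y_j)$ collapses to zero during the evolution. The compactness of $\Omega$ together with the Gibbs form \eqref{eq:probs} (which keeps each $\pyx$ bounded away from $0$ and $1$ for any fixed $T>0$) is exactly what rules out this pathology, so no additional assumption beyond those already made in the problem setup is needed.
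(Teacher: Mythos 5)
Your proof follows essentially the same route as the paper's: use $\dot F = -2K_0\yb^T\Pby\yb\le 0$ together with boundedness of $F$ on the compact set $\Omega$ to get integrability of $-\dot F$, apply Lemma \ref{lem:bounded} to conclude $\yb\to 0$, and use positive definiteness of $\Pby$ plus invertibility of $\Theta$ to recover the tracking claim. The only difference is that you explicitly verify the bounded-variation hypothesis of Lemma \ref{lem:bounded} and the uniform lower bound on the cluster masses $p(y_j)$, details the paper's proof leaves implicit.
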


\begin{proof}
Because the system is constrained within the compact set $\Omega$, the continuous real-valued function $F$ is bounded from below. Since $\dot{F}\leq 0$ we must have that $F(t) \rightarrow F_\infty$ as $t\rightarrow \infty$. This implies that $\int_0^\infty -\dot{F}(\tau)d\tau = F(0) - F(\infty) < \infty$. Because $-\dot{F}$ is non-negative, using Lemma \ref{lem:bounded} we deduce that $\lim_{t\rightarrow\infty} -\dot{F}(t) = 0$. Since $-\dot{F}(t) = 2K_0\yb^T\Pby\yb$ and $\Pby$ is positive definite then we must have $\yb(t) \rightarrow 0$ as $t\rightarrow 0$. Using the definition of $\yb$ and because $\Theta$ is invertible by design, $y - \Theta^{-1}\Pbxy x\in \text{Null}(\Theta) = \{0\}$. This yields that $y(t) \rightarrow \Theta^{-1}\Pbxy x$ as $t\rightarrow \infty$ analogous to Equation \eqref{eq:centroid}.
\end{proof}

\begin{figure*}[!htbp]
\label{fig:trajectories}
\centering
\begin{tabular}{ccc}
\includegraphics[width=55mm]{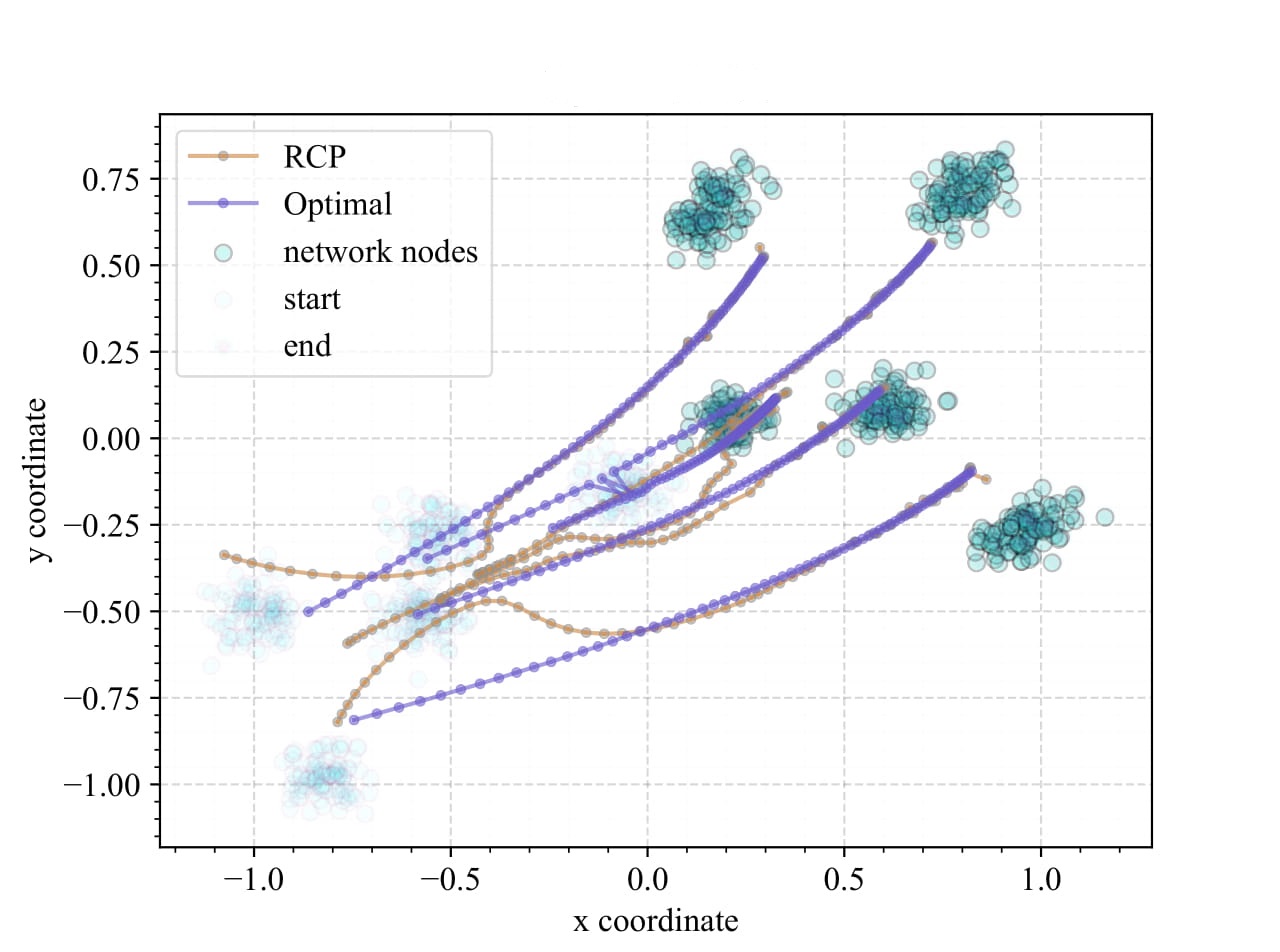}
 &\includegraphics[width=55mm]{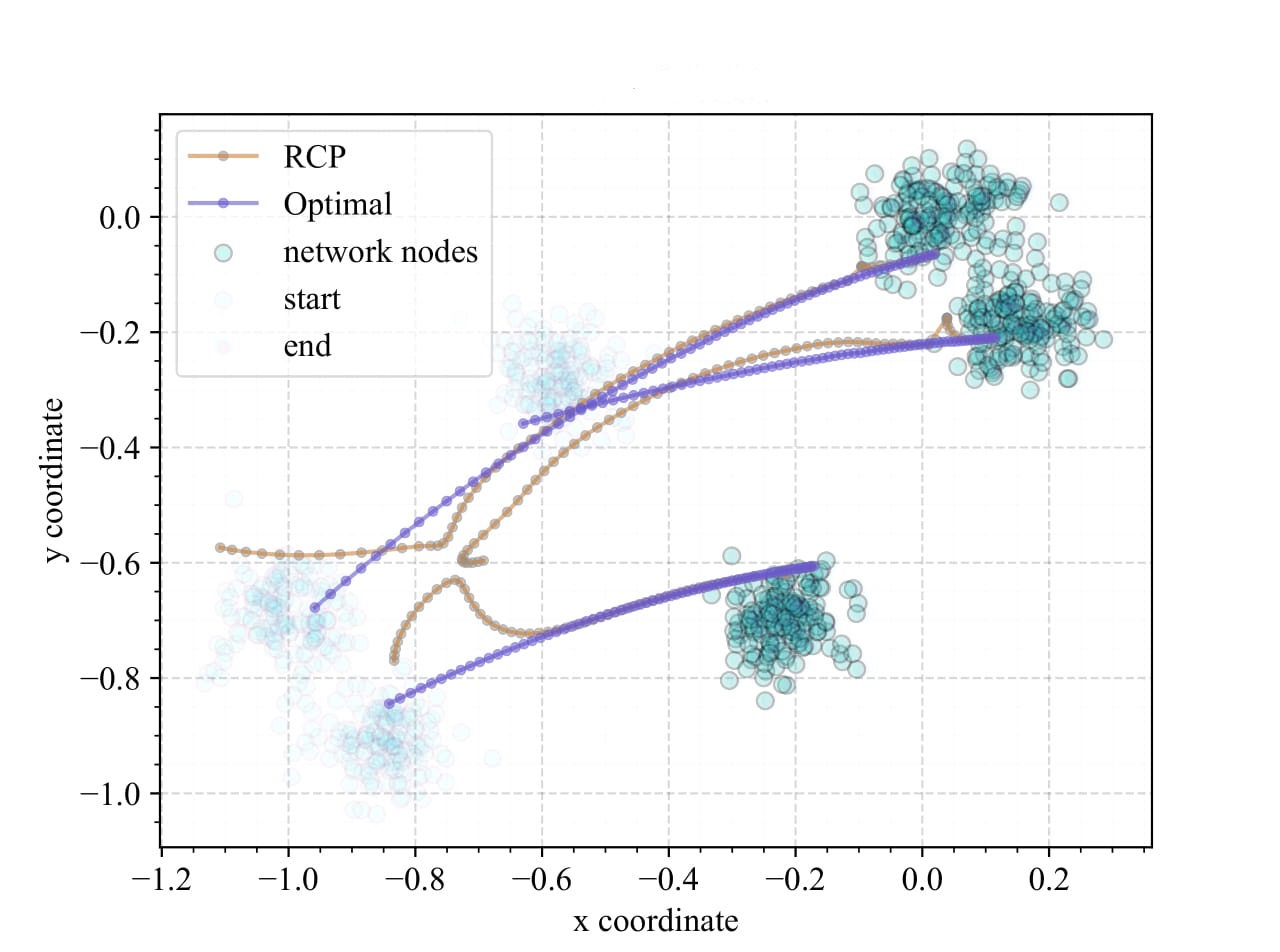}
 &\includegraphics[width=55mm]{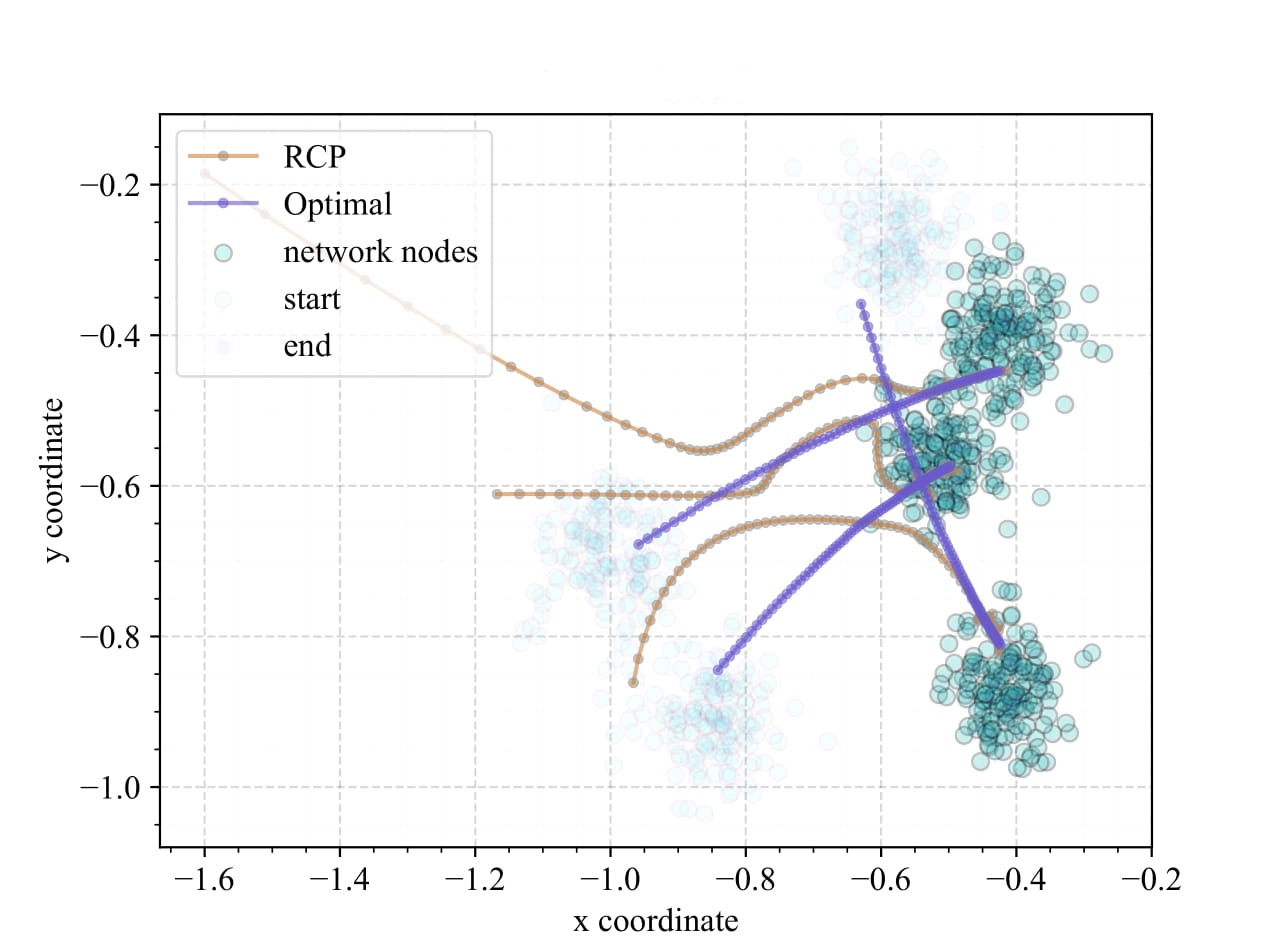}
\end{tabular}
\caption{Performance of RCP for four network topologies and mobility models.}
\end{figure*}

\begin{figure*}[!htbp]
\label{fig:sim}
\centering
\begin{tabular}{ccc}
\includegraphics[width=55mm]{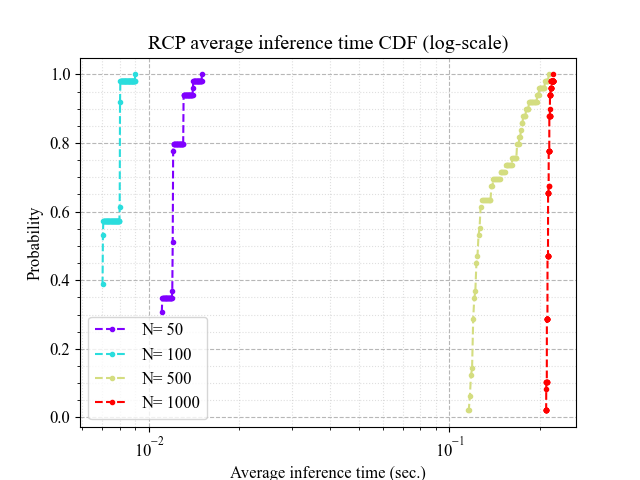}
 &\includegraphics[width=55mm]{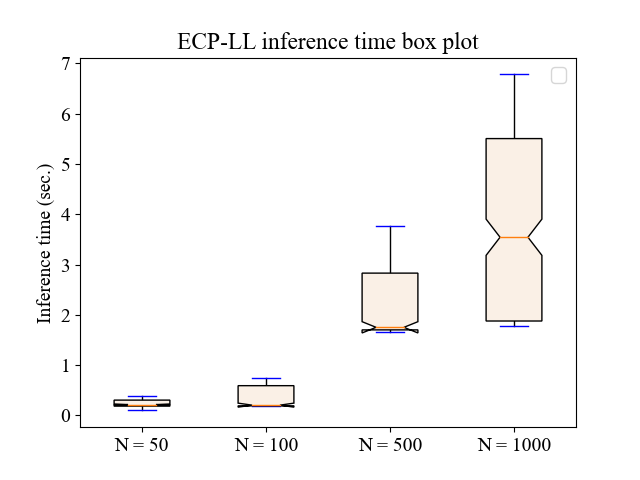}
&\includegraphics[width=55mm]{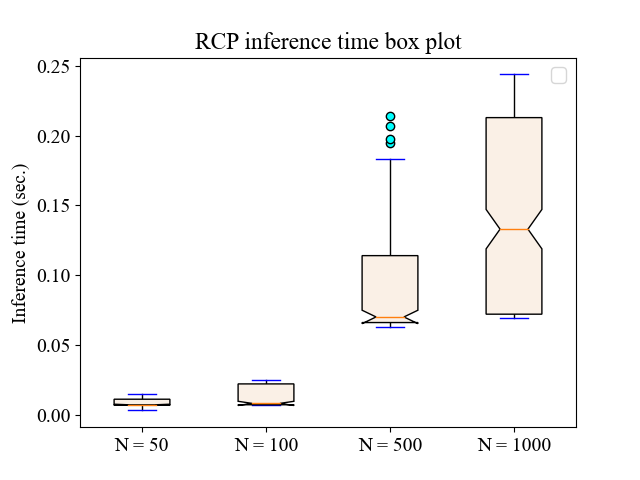}\\
(a) &(b) &(c)\\
 \includegraphics[width=55mm]{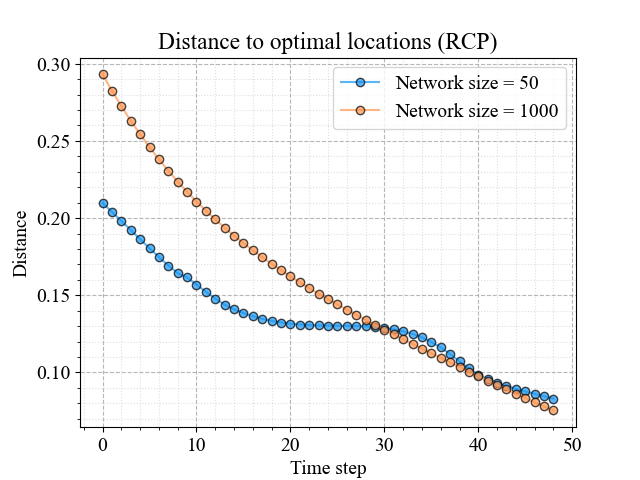}
 &\includegraphics[width=55mm]{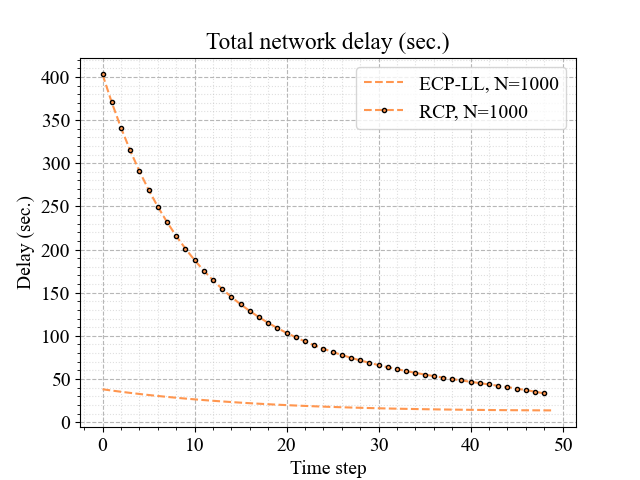}
 &\includegraphics[width=55mm]{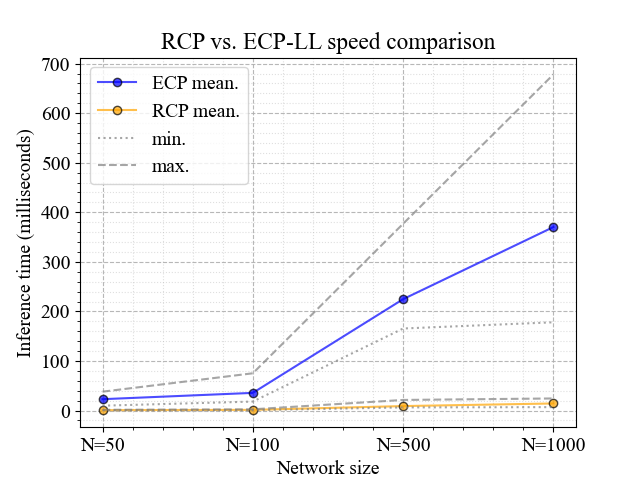}\\
 (d) &(e) &(f)
\end{tabular}
\caption{Performance of RCP for four network topologies and mobility models.}
\end{figure*}

\begin{algorithm}
\label{alg:main}
Initialize $y\in \R^{2M}$, starting temperature $T \approx \infty$, current time $t = 0$, time horizon $\tau\in \R^+$, time steps $n\in \Nn$, and decay rate $0<\alpha<1$\;

$\Delta t \leftarrow \frac{\tau}{n}$\;

\For{$i= 1$ \KwTo $n$}{
update time: $t \leftarrow t + (i - 1)\Delta t$\;

update association weights $\Pyx$ using \eqref{eq:probs}:
$
\small{
\pyx = \exp{\left(-\frac{\dxy}{T}\right)}/Z_i\quad \forall i\in \N, j\in \M}\;
$\;
update $y$ using control law (\ref{eq:control}):
\[
u = -\left[k_0 + \frac{(x^T - y^T\Pbyx^T)\phi}{\yb^T \Pby \yb}\right]\yb
\]
\[
y(t + \Delta t) \leftarrow y(t) + u(t)\Delta t
\]
update system temperature: $T \leftarrow \alpha T$
}
\caption{RCP}
\end{algorithm}

\section{Simulations}
\label{sec:simulations}


We performed simulations using Python 3.9.0 on a Razer Blade 15 laptop with Intel Core i7-10750H @ 2.60 GHz CPU, 16.0 GB RAM and Windows 10 Operating System. For the initial problem, data is generated as Gaussian distributions with randomized mean and standard deviations. Each starting cluster (distribution) is assigned to a destination cluster of equal size and each point in the initial cluster is assigned to a random point in the respective destination cluster. The network is modeled as a first-order linear dynamical system similar to \eqref{eq:system} with node positions given by $x(t): \R^+\rightarrow \R^2$, $x(t) = (x_{start} - x_{end}) \exp(-k t) + x_{end}$. At $t=0$ we start at $x_{start}\in \R^2$ and as $t\rightarrow \infty$ the system converges to $x_{end}\in \R^2$. 

The value $k$ for each point is randomized and is generated using the Rayleigh distribution with parameter $\sigma = 0.5$, and the probability distribution function $f(x; \sigma) = \frac{x}{\sigma^2}\exp(\frac{-x^2}{2\sigma^2})$. Rayleigh distributions are typically used for simulation of particle trajectories, which have coordinate-wise normally-distributed velocities. Due to arithmetic underflow that can occur in the floating point division in the update rule \eqref{eq:probs} we recommend the following normalization approach for any point $x \in \Omega$:
\begin{equation}
    x_{new} = \frac{x - \mu}{\sigma_{max} - \sigma_{min}}
\end{equation}
where $\mu = \frac{\int_{\Omega}  x dx}{\int_{\Omega} dx}$ is the mass center of the domain space $\Omega$. $\sigma_{min} = \min_{x\in \Omega} \normf{x}$ and $\sigma_{max} = \max_{x\in \Omega} \normf{x}$ are the minimal and maximal values the coordinates of points in $\Omega$ can take, with $\normf{.}$ being the infinity or max norm. After this normalization the network is restrained within the box $[-1, 1] \times [-1, 1]$

Figure 2 shows the trajectories generated by RCP algorithm. We encourage the reader to visit this link~\footnote{\url{https://www.youtube.com/playlist?list=PLkmxIANUXsFA2OCG7Uce1vBXYCqvvHSA3}} for animations of these controller placement simulations. Starting and ending positions of the clusters are slightly faded to indicate where the nodes have started and will end. The purple trajectories show the near-optimal~\footnote{It has been shown and is widely accepted that finding the globally optimal solution for controller placement is an NP hard problem. Here we work with a high quality locally optimal solution and refer to it as \textit{near-optimal} for convenience.} placement of controllers calculated using the frame by frame approach and the algorithm ECP-LL from \cite{Soleymanifar2020ABalancing}. This means that at each time step ECP-LL finds a near-optimal placement of controllers, shown by the purple dots along the optimal trajectories for a specific frozen snapshot of system. After random initialization within $\Omega$, RCP should ideally track these purple trajectories. 

Figure 3 shows various properties of RCP algorithm. Plot (a) shows the CDF~\footnote{Cumulative distribution function.} of inference time~\footnote{Here inference time is the time it takes to compute optimal placements for a system snapshot.} for RCP algorithm given various network sizes. Plot (b) and (c) show the quartiles of inference time for both RCP and ECP-LL algorithms using the box plot. Plot (d) shows the distance of RCP placements to the optimal controller locations. As can be seen this distance converges to zero as time goes by. Plot (e) shows the evolution of total network delay for both ECP-LL and RCP algorithms. Plot (f) compares the inference time of RCP versus ECP-LL algorithm across various network sizes. A more detailed statistics on the inference time can be seen in Table 2. These results show that RCP can be up to 25 times faster than conventional frame by frame approach using ECP-LL.


\begin{table}[htbp!]
\centering
\label{tab:stats}
\caption {Average inference time (Milliseconds).}
\begin{tabular}{lllll}
\hline
Network size & \multicolumn{2}{c}{ECP-LL} & \multicolumn{2}{c}{RCP} \\ \hline
             & average      & STD         & average      & STD      \\ \cline{2-5} 
50           & 22.92        & 8.38        & 0.83         & 0.30     \\
100          & 35.65        & 20.00       & 1.35         & 0.72     \\
500          & 224.79       & 67.29       & 9.03         & 3.22     \\
1000         & 370.30       & 167.19      & 14.27        & 6.56     \\ \bottomrule\bottomrule
\end{tabular}
\end{table}




\textbf{Computational Complexity} We can break down the computational complexity of RCP as follows: (1) calculating mutual distances between all $x_i, y_j$ pairs ($i\in\N$, $j\in \M$); (2) similar calculations for mutual distances between controllers; (3) calculating the distortions $d(x_i, y_j)$ for all $i\in \N$, $j\in \M$; (4) calculating association probabilities; and (5) updating the $y_j$ trajectories. The complexities for these operations are respectively: (1) $\OO(NMd)$, (2) $\OO(M^2d)$, (3) $\OO(Md)$, (4) $\OO(NM)$ and (5) $\OO(MNd^2)$. For terms (1) to (4) the calculations are analogous to those in the study completed in \cite{Soleymanifar2020ABalancing}. For term (5) the result comes from the fact that updating the $y_j$ requires calculating the numerator $\OO(MNd^2 + Nd)$ and denominator $\OO(M^2d^2)$, plus the final multiplication by $\yb$, $\OO(Md)$, which is dominated by the term $\OO(MNd^2)$ for $N>>M$. For a fixed time horizon $\tau$ we can express the overall computational complexity of RCP as $\OO(\tau NMd^2)$. This is a significant gain over other DCP algorithms like LiDy and LiDy+ discussed in \cite{ulHuque2017Large-ScalePlacement}, which respectively report $\OO(N^2)$ and $\OO(N^3)$ complexities in terms of network size.
Using the result in Corollary \ref{cor:track} it can be seen that in the limit RCP becomes a variation of the ECP-LL algorithm in \cite{Soleymanifar2020ABalancing} that is computed for only one iteration per each system snapshot. This observations roughly means that, in the limit, RCP is $\frac{T}{d}$ times faster than ECP-LL if on average ECP-LL takes $T$ iterations to converge for each system snapshot. This is because running ECP-LL over the time horizon will on average require $\OO(\tau TNMd)$ flops.

\section{CONCLUSION}
\label{sec:conclusion}
In this work we introduced RCP, a temporal clustering algorithm for real-time controller placement in mobile SDN systems. To the best of our knowledge, RCP is the first algorithm in DCP literature that exploits the temporal relationships of the network dynamics in order to efficiently adapt placement solutions in real-time. RCP leverages the principle of maximum entropy to avoid poor local optima that abound on the surface of our balanced cost function, and thus consistently provides high quality solutions. Unlike conventional methods that shrink the decision space into a discrete set, our algorithm allows use of the \textit{open search} method for placement, which makes it unlikely to yield sub-optimal solutions. RCP has linear $\OO(n)$ iteration computational complexity with respect to the network size and can be up to 25 faster than the conventional frame-by-frame approach. There are several analytical properties of the RCP algorithm such as sensitivity to temperature scheduling and convergence rate that we intend to address in follow-up work. A future direction for this work that can improve practicality of RCP is to include the estimation of the dynamics of the network, which we assumed herein is fully available to the decision maker. Another important direction is to incorporate a mechanism that enables RCP to dynamically change the number of controllers in the network.

\addtolength{\textheight}{-12cm}   



---







\bibliographystyle{ieeetr}
\bibliography{references.bib}

\end{document}